\providecommand{\ignore}[1]{}
\newif\ifcmnt
    \providecommand{\aucmnt}[1]{#1}
    \providecommand{\aucmnt}[1]{}
\newtheorem{proposition}{Proposition}
\newtheorem{theorem}{Theorem}
\newcommand{\cB}{\mathcal{B}}
\newcommand{\cC}{\mathcal{C}}
\newcommand{\cE}{\mathcal{E}}
\newcommand{\cH}{\mathcal{H}}
\newcommand{\cK}{\mathcal{K}}
\newcommand{\cU}{\mathcal{U}}
\begin{document}


\title{Quantum Process Fidelity Bounds from Sets of Input States}

\author{Karl Mayer}
\email{karl.mayer@colorado.edu}
 \affiliation{National Institute of Standards and Technology, Boulder, Colorado, USA}
\affiliation{Department of Physics, University of Colorado, Boulder, Colorado, USA}

\author{Emanuel Knill}
\affiliation{National Institute of Standards and Technology, Boulder, Colorado, USA}
\affiliation{Center for Theory of Quantum Matter, University of Colorado, Boulder, Colorado, USA}

\date{\today}

\begin{abstract}
We investigate the problem of bounding the quantum process fidelity given bounds on the fidelities between target states and the action of a process on a set of pure input states. We formulate the problem as a semidefinite program and prove convexity of the minimum process fidelity as a function of the errors on the output states. We characterize the conditions required to uniquely determine a process in the case of no errors, and derive a lower bound on its fidelity in the limit of small errors for any set of input states satisfying these conditions. We then consider sets of input states whose one-dimensional projectors form a symmetric positive operator-valued measure (POVM). We prove that for such sets the minimum fidelity is bounded by a linear function of the average output state error. The minimal non-orthogonal symmetric POVM contains $d+1$ states, where $d$ is the Hilbert space dimension. Our bounds applied to these states provide an efficient method for estimating the process fidelity without the use of full process tomography.
\end{abstract}

\maketitle


\section{Introduction}
\label{sec:level1}

As the complexity of small scale quantum devices continues to
increase, efficient methods for characterizing the performance of such
devices will become ever more important. A common problem is to
determine how well a process implemented by these devices matches a
unitary target process. A general tool for solving this problem is
process tomography~\cite{Chuang1997}. In a $d$-dimensional Hilbert
space, full process tomography requires preparing $d^2$ states,
applying the process to each and characterizing the final states with
informationally complete measurements. In systems with many qubits,
the resources required for full process tomography make it
prohibitively expensive. In practice, however, one is often concerned
only with estimating the process fidelity with respect to the target
process. These estimates can take the form of rigorous upper and lower
bounds, which raises the question of the resources required for
obtaining such bounds.

A method for bounding the process fidelity due to
Hofmann involves the use of two mutually unbiased
bases~\cite{Hofmann2005}. For each basis, one applies the process to
the states corresponding to the basis elements and computes the
average of the fidelities between the resulting output and the desired
target states.  These averages $F_i$, $i=1,2,$ determine bounds on the
process fidelity, where $F_{1}=F_{2}=1$ only for the target
process. This method requires $2d$ input states and measurements, a
substantial reduction in resources compared to process tomography. The
reduction comes at the cost of a gap between the lower and upper
bounds on conventional fidelities, which suggests the problems of
characterizing the tradeoff between number of input states and the gap
and of determining the minimum number of input states that are
sufficient for identifying the process.

In Ref.~\cite{Reich2013}, conditions required for the action on a set
of input states to uniquely determine a unitary process were obtained,
and a set of $d+1$ pure states satisfying the conditions was
introduced. The set contains an orthonormal basis plus a state that is
an equal superposition of the basis elements. The authors numerically
compared the process fidelity to a heuristically chosen average
between arithmetic and geometric means of the state fidelities,
finding a good correspondence between the two quantities. An exact
lower bound on the process fidelity in terms of the output state
fidelities for this set of input states in the two-qubit case was
subsequently given in Ref.~\cite{Fiurasek2014}. Such analytic
expressions for the minimum process fidelity are difficult to find in
general, with only a few examples currently
known~\cite{Micuda2013,Sedlak2016}.

In this paper, we develop a general approach for bounding the process
fidelity of a quantum process $\mathcal{E}$ with respect to a unitary
target given the fidelities for pure input states $\ket{\psi_k}$. We
first formulate the problem as a semidefinite
program~\cite{Audenaert2002}, which can be solved numerically for
any set of input states.
We then consider the case where the process acts
perfectly, that is, without error, on each input state. We give
necessary and sufficient conditions that the input states must satisfy
in order to uniquely determine the process given that the process has
unit fidelity for the input states, and show that the minimum number
of required states is $d$. In the case of errors, we derive a bound on
the process infidelity that is $O(\sqrt{\epsilon})$ in the errors. The
bound is expressed in terms of a weighted graph constructed from the
inner products of pairs of input states. Although this bound holds for
any set of input states satisfying the aforementioned conditions, it
is not tight, and we compare it with numerical solutions for random
sets of input states. Finally, we prove simple bounds on the process
fidelity for particular sets of input states, namely $N$ pure states
with $d+1\le N\le d^2$ whose projectors form a symmetric POVM. For the
minimal such set of input states, the bounds we obtain improve upon
the work of Ref.~\cite{Reich2013} and provide an efficient protocol
for bounding the process fidelity, which we compare to the method of
Ref.~\cite{Hofmann2005} for various error channels.

\section{Preliminaries}\label{Preliminaries}

Let $\mathcal{H}=\mathbb{C}^d$ denote a $d$-dimensional Hilbert space,
and $\mathcal{B}(\mathcal{H})$ the space of linear operators on
$\mathcal{H}$. For a pure state $\ket{\psi}$, we abbreviate
  $\ketbra{\psi}$ by $\hat\psi$. The identity operator is denoted
  by $I$. A \textit{quantum process} or \textit{channel} is a linear map
$\mathcal{E}:\mathcal{B}(\mathcal{H})\to\mathcal{B}(\mathcal{H})$ that
is completely positive and trace preserving
(CPTP)~\cite{Nielsen2010}. According to the Choi-Jamiolkowski
isomorphism~\cite{Choi1975,Jamiolkowski1972}, a CPTP map $\cE$ may be
represented by a density operator $\chi$ on the tensor product space
$\cH\otimes\cH$, which is defined as follows. Let $\{\ket{x}\}$ be an
orthonormal basis for $\cH$ and let
$\ket{\phi}=\frac{1}{\sqrt{d}}\sum_{x=0}^{d-1}\ket{x}\ket{x}$ be a
maximally entangled bipartite state. Then the \textit{Choi operator}
is given by
\begin{equation*}
\chi=(I\otimes\mathcal{E})(\hat\phi).
\end{equation*}
The complete positivity and trace preserving properties of $\cE$ result in the requirements that $\chi\ge0$, and that the partial trace satisfies $\mathrm{Tr}_2(\chi)=I/d$, respectively. In terms of the Choi operator, the output of the process on an arbitrary state $\rho\in\cB(\cH)$ is given by
\begin{equation}\label{process output}
    \mathcal{E}(\rho) = d\,\mathrm{Tr}_1(\chi(\rho^{\intercal}\otimes I)),
\end{equation}
where the superscript $\intercal$ on $\rho^{\intercal}$ denotes
transposition with respect to the basis $\{\ket{x}\}$. We also
  need the useful property of $\ket{\phi}$ that
\begin{equation}\label{max ent prop}
(A\otimes I)\ket{\phi}=(I\otimes A^{\intercal})\ket{\phi},
\end{equation}
for any operator $A$.

One measure of how close a process $\mathcal{E}$ comes to implementing a desired unitary operation $U$ is the \textit{average fidelity}, defined as
\begin{equation*}
    F_{\mathrm{avg}}(\mathcal{E},U)=\int d\psi\bra{\psi}U^{\dag}\mathcal{E}(\hat{\psi})U\ket{\psi},
\end{equation*}
where the integral is taken over all pure states with respect to the Haar measure. A closely 
related quantity is the \textit{entanglement fidelity}, which we simply call the process fidelity. It is defined as
\begin{equation}\label{def fidelity}
    F(\mathcal{E},U)=\bra{\phi}(I\otimes U^{\dag})\chi (I\otimes U)\ket{\phi}=\Trace{(\chi\chi_U)},
\end{equation}
where $\chi_U$ is the Choi operator for the unitary $U$. The process fidelity measures not only how well quantum information in a system is preserved, but also how well the entanglement with other systems is
preserved. The average fidelity is linearly related to
the process fidelity by the formula \cite{Nielsen2002}
\begin{equation*}
    F_\mathrm{avg}=\frac{d\,F+1}{d+1}.
\end{equation*}
For the remainder of this paper,
fidelities of processes will be taken with respect to the identity:
$F(\cE)\equiv
F(\cE,I)$. This is done without loss of generality by replacing $\cE$ with $\cU^{\dagger}\circ\cE$, where $\cU^{\dag}(\rho)=U^{\dag}\rho\,U$.

\section{Statement of problem}\label{Statement of problem}

Let
$\{\ket{\psi_k}\}_{k=1}^N$ be an indexed family of pure states in
$\mathcal{H}$, fix $\bm{\epsilon}=(\epsilon_1,\ldots,\epsilon_N)$ with $\epsilon_k\geq 0$, and let $\cC(\bm{\epsilon})$ be the
convex set of CPTP maps $\mathcal{E}$ such that for all $k$,
\begin{equation}\label{constraint}
    \bra{\psi_k}\mathcal{E}(\hat{\psi}_k)\ket{\psi_k} \ge 1-\epsilon_k.
\end{equation}
We refer to $\{\ket{\psi_{k}}\}$ as the set of input states.
We wish to find
\begin{equation*}
F_\mathrm{min}(\bm{\epsilon})=\min_{\mathcal{E}\in\mathcal{C}(\bm{\epsilon})}F(\mathcal{E}).
\end{equation*}
Note that the minimum is achieved by compactness of the feasible set.
The $\epsilon_k$ are upper bounds on the state infidelities, and can
be interpreted as errors which have been determined experimentally. If
the input states can be prepared with high fidelity, then the
$\epsilon_k$ can be obtained by measuring survival probabilities after
applying the inverse of the state-preparation transformation once
$\mathcal{E}$ has acted on the input states. If the target process is
some unitary other than the identity, often the target output states
can be mapped back to the measurement basis by high-fidelity
unitaries, in which case the state fidelities
$\bra{\psi_k}\mathcal{E}(\hat{\psi}_k)\ket{\psi_k}$ can be obtained
directly. Otherwise the state fidelities can be obtained via direct
fidelity estimation~\cite{Flammia2011}, which for qubit systems
requires only one-qubit gates and Pauli basis
measurements, and a number of experimental trials that grows linearly
in $d$.

We also consider the situation where upper bounds on the state fidelities are known. In this case, the problem is to find
\begin{equation*}
F_\mathrm{max}(\bm{\epsilon})=\max_{\mathcal{E}\in\cK(\bm{\epsilon})}F(\cE),
\end{equation*}
where $\cK(\bm{\epsilon})$ is the convex set of CPTP maps $\cE$
that satisfy
\begin{equation*}
    \bra{\psi_k}\mathcal{E}(\hat{\psi}_k)\ket{\psi_k} \le 1-\epsilon_k
\end{equation*}
for all $k$.

The bounds $F_\mathrm{min}$ or $F_\mathrm{max}$ can be found numerically by solving a semidefinite program (SDP)~\cite{Vandenberghe1996}. To formulate our problem as an SDP, we use the Choi matrix representation and Eq.~\ref{process output}.
Our task is then to solve
\begin{align}\label{SDP}
\begin{array}[b]{ll}
\textrm{Minimize:}&\mathrm{Tr}(\chi\hat\phi)\\
\textrm{Subject to:}&\mathrm{Tr}_2(\chi)=I/d,\\
&\mathrm{Tr}(\chi\,(\hat{\psi}_k^{\intercal}\otimes\hat{\psi}_k))\ge\frac{1}{d}(1-\epsilon_k),\\
&\chi\ge0.
\end{array}
\end{align}
A number of software packages are available for efficiently solving
SDP's; for this work we used cvx~\cite{Boyd2004,CVXResearch2012}.  We
thus have a numerical solution to our posed problem: once the
experimenter has determined the $\epsilon_k$, they can then solve the
above SDP to obtain $F_\mathrm{min}$ as a lower bound for the process
fidelity. However, the experimenter may wish to know which set of
input states to prepare in order to get a good lower bound. We
  therefore investigate properties of the solution to the SDP given
by Eq.~\ref{SDP}, both in general and for special cases with
particular errors or input states.

\section{Convexity}

Our first observation is that the minimum process fidelity is a convex function of the error bounds $\epsilon_k$.

\begin{proposition}
  $F_\mathrm{min}(\bm{\epsilon})$ is convex, that is,
  $F_\mathrm{min}(t\bm{\epsilon}_1+(1-t)\bm{\epsilon}_2)\le t
  F_\mathrm{min}(\bm{\epsilon}_1)+(1-t)F_\mathrm{min}(\bm{\epsilon}_2)$
  for all $t\in[0,1]$.
\end{proposition}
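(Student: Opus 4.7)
The plan is to use a direct convex-combination argument exploiting three facts that are already in place: (i) the fidelity $F(\cE) = \Trace(\chi\hat\phi)$ is \emph{linear} in the Choi operator $\chi$ (hence linear in $\cE$), (ii) the constraints defining $\cC(\bm{\epsilon})$ in Eq.~\eqref{constraint} are linear in both $\cE$ and $\bm\epsilon$, and (iii) CPTP maps are closed under convex combinations, so mixing two feasible channels yields another CPTP map.

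First I would pick minimizers $\cE_1 \in \cC(\bm\epsilon_1)$ and $\cE_2 \in \cC(\bm\epsilon_2)$, which exist by the compactness remark following the definition of $F_\mathrm{min}$. For any $t\in[0,1]$, set $\cE_t = t\cE_1 + (1-t)\cE_2$. This channel is CPTP since its Choi operator $\chi_t = t\chi_1 + (1-t)\chi_2$ is a convex combination of positive operators with partial trace $I/d$. Next I would check feasibility at the mixed error vector: for each input state $\ket{\psi_k}$,
\begin{align*}
\bra{\psi_k}\cE_t(\hat\psi_k)\ket{\psi_k}
&= t\bra{\psi_k}\cE_1(\hat\psi_k)\ket{\psi_k} + (1-t)\bra{\psi_k}\cE_2(\hat\psi_k)\ket{\psi_k} \\
&\ge t(1-\epsilon_{1,k}) + (1-t)(1-\epsilon_{2,k}) \\
&= 1-\bigl(t\epsilon_{1,k}+(1-t)\epsilon_{2,k}\bigr),
\end{align*}
so $\cE_t \in \cC(t\bm\epsilon_1 + (1-t)\bm\epsilon_2)$.

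Finally, since $F$ is linear in $\cE$,
\begin{equation*}
F_\mathrm{min}(t\bm\epsilon_1+(1-t)\bm\epsilon_2) \le F(\cE_t) = tF(\cE_1)+(1-t)F(\cE_2) = tF_\mathrm{min}(\bm\epsilon_1)+(1-t)F_\mathrm{min}(\bm\epsilon_2),
\end{equation*}
which is the desired convexity inequality.

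There is no real obstacle here; the proposition is essentially the standard statement that the optimal value of a linear program is a convex function of the right-hand sides of its inequality constraints (the perturbation function of a convex program is convex). The only points that need mentioning explicitly are the convexity of the set of CPTP maps and the linearity of $F$ in the Choi representation, both of which are immediate from Section~\ref{Preliminaries}.
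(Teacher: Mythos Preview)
Your proof is correct and follows essentially the same approach as the paper: pick minimizers $\cE_1,\cE_2$, form the convex combination $\cE_t$, verify it lies in $\cC(t\bm\epsilon_1+(1-t)\bm\epsilon_2)$ by linearity of the constraints, and conclude via linearity of $F$. The only differences are cosmetic---you make the CPTP-closure step and the perturbation-function interpretation explicit, while the paper leaves these implicit.
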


\begin{proof}
Let $\mathcal{E}_1\in\mathcal{C}(\bm{\epsilon}_1)$ and $\mathcal{E}_2\in\mathcal{C}(\bm{\epsilon}_2)$
satisfy $F(\mathcal{E}_1)=F_\mathrm{min}(\bm{\epsilon}_1)$ and $F(\mathcal{E}_2)=F_\mathrm{min}(\bm{\epsilon}_2)$, and consider $\mathcal{E}=t\,\mathcal{E}_1+(1-t)\,\mathcal{E}_2$. By linearity of the process fidelity, 
\begin{equation*}
    F(\mathcal{E})=t F_\mathrm{min}(\bm{\epsilon}_1)+(1-t)F_\mathrm{min}(\bm{\epsilon}_2).
\end{equation*}
But for all $k$,
\begin{align*}
    \bra{\psi_k}\mathcal{E}(\hat{\psi}_k)\ket{\psi_k} &=\bra{\psi_k}(t\,\cE_1(\hat{\psi}_k)+(1-t)\,\cE_2(\hat{\psi}_k))\ket{\psi_k}\\
    &\ge t\,(1-(\bm{\epsilon}_1)_k)\;+\;(1-t)\,(1-(\bm{\epsilon}_2)_k)\\
    &= 1 -t(\bm{\epsilon}_1)_k-(1-t)(\bm{\epsilon}_2)_k,
\end{align*}
where the second line follows from
$\mathcal{E}_{i}\in\mathcal{C}(\bm{\epsilon}_{i})$.  Consequently
$\mathcal{E}\in\mathcal{C}(t\bm{\epsilon}_1+(1-t)\bm{\epsilon}_2)$,
and therefore
\begin{align*}
  F_\mathrm{min}(t\bm{\epsilon}_1+(1-t)\bm{\epsilon}_2)&\le F(\mathcal{E})\\
  &= t F_\mathrm{min}(\bm{\epsilon}_1)+(1-t)F_\mathrm{min}(\bm{\epsilon}_2).
\end{align*}
\end{proof}
The convexity of the minimum process fidelity is illustrated in
Fig.~\ref{fig:fidelity vs error}, which shows a plot for
$\epsilon_k=\epsilon$ of $F_\mathrm{min}(\epsilon)$
versus $\epsilon$ for two sets of input states to be defined in
Sec.~\ref{SINC POVM} (we use an unbold $\epsilon$ in
$F_\mathrm{min}(\epsilon)$ to denote constant $\epsilon_k$).  A useful
consequence of the convexity property is that a lower bound on the
process fidelity can be obtained from a tangent line of
  $F_{\mathrm{min}}(\epsilon)$ at $\epsilon=0$.

For the function $F_{\mathrm{max}}$ we have:
\begin{proposition}
$F_\mathrm{max}(\bm{\epsilon})$ is concave, that is, $F_\mathrm{max}(t\bm{\epsilon}_1+(1-t)\bm{\epsilon}_2)\ge t F_\mathrm{max}(\bm{\epsilon}_1)+(1-t)F_\mathrm{max}(\bm{\epsilon}_2)\;\forall t\in[0,1].$
\end{proposition}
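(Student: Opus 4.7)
The plan is to mirror the preceding proof of convexity of $F_\mathrm{min}$, with inequalities flipped so the upper-bound constraints defining $\cK(\bm{\epsilon})$ combine correctly under convex combinations. First I would select optimizers $\cE_1\in\cK(\bm{\epsilon}_1)$ and $\cE_2\in\cK(\bm{\epsilon}_2)$ with $F(\cE_i)=F_\mathrm{max}(\bm{\epsilon}_i)$; their existence follows from compactness of each $\cK(\bm{\epsilon}_i)$ (a closed subset of the compact set of CPTP maps, since each constraint $\bra{\psi_k}\cE(\hat\psi_k)\ket{\psi_k}\le 1-\epsilon_k$ is a closed linear inequality in $\cE$) combined with linearity of $F$ in $\cE$.

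Next I would form $\cE=t\cE_1+(1-t)\cE_2$ and show $\cE\in\cK(t\bm{\epsilon}_1+(1-t)\bm{\epsilon}_2)$. Because $\cE(\hat\psi_k)$ is linear in $\cE$, the quantity $\bra{\psi_k}\cE(\hat\psi_k)\ket{\psi_k}$ is the convex combination of the corresponding expressions for $\cE_1$ and $\cE_2$; each of these is bounded above by $1-(\bm{\epsilon}_i)_k$ by feasibility, so the convex combination is bounded above by $1-(t\bm{\epsilon}_1+(1-t)\bm{\epsilon}_2)_k$, which is exactly the constraint for the averaged error vector.

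Finally, linearity of $F$ gives $F(\cE)=tF_\mathrm{max}(\bm{\epsilon}_1)+(1-t)F_\mathrm{max}(\bm{\epsilon}_2)$, and feasibility of $\cE$ yields $F_\mathrm{max}(t\bm{\epsilon}_1+(1-t)\bm{\epsilon}_2)\ge F(\cE)$, which is the claimed concavity. There is no genuine obstacle here: the argument is structurally identical to the proof for $F_\mathrm{min}$, and the only sign check is that upper-bound constraints, like lower-bound ones, are preserved under convex combinations of feasible points, so the sole change from that proof is the direction in which optimality and feasibility combine to produce the desired inequality.
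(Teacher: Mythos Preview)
Your proposal is correct and matches the paper's approach exactly: the paper does not write out a separate proof but simply states that one follows the convexity proof for $F_{\mathrm{min}}$, replacing $\min$ by $\max$ and reversing inequalities as necessary, which is precisely what you do. Your added remark on compactness justifying the existence of the optimizers is a harmless (and welcome) clarification.
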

The proof can be obtained by following the proof of convexity of
  $F_{\mathrm{min}}$, replacing $\min$ by $\max$ and reversing
  inequalities as necessary.

\section{Processes with $\epsilon=0$}
\label{sec:level2}

  In this section we analyze the special case where each $\epsilon_k=0$.
  If the only process $\mathcal{E}$ with
  $\mathcal{E}(\hat\psi_{k})=\hat\psi_{k}$ for all $k$ is the identity
  process, we say that the set of input states identifies unitaries.
  Identifying unitaries is equivalent to $F_{\mathrm{min}}(0)=1$.  The
  next theorem characterizes sets of input states that identify
  unitaries.  Define the graph $G=(V,E)$ by assigning vertex set
  $V=\{k\}$ and edge set $E=\{(k,k^{\prime}):
  \bra{\psi_{k}}\ket{\psi_{k^{\prime}}}\ne 0\}$.
\begin{theorem}\label{theorem1}
  The set of input states identifies unitaries iff the input states span $\mathcal{H}$ and the graph $G$ is
    connected.
\end{theorem}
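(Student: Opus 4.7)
The plan is to prove both implications separately, with the reverse direction built on Stinespring dilation.

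For necessity I would argue by contrapositive: if either condition fails, I must exhibit a non-identity CPTP map that fixes every $\hat{\psi}_{k}$. If the input states fail to span $\mathcal{H}$, I choose any unitary $U$ that acts as the identity on $\mathrm{span}\{\ket{\psi_{k}}\}$ but non-trivially on its orthogonal complement; conjugation by $U$ then fixes every $\hat{\psi}_{k}$ but is not the identity channel. If instead the states span $\mathcal{H}$ but $G$ has connected components $V_{1},\ldots,V_{m}$ with $m\ge 2$, the disconnectedness of $G$ forces orthogonality between states indexed by different components, so $\mathcal{H}$ splits as an orthogonal direct sum of the component spans. Taking $U=\sum_{i}e^{i\theta_{i}}P_{i}$, with $P_{i}$ the projector onto the $i$th component span and the phases not all equal, gives a non-identity unitary whose conjugation channel fixes every $\hat{\psi}_{k}$.

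For sufficiency I would work with the Stinespring representation $\mathcal{E}(\rho)=\mathrm{Tr}_{\mathcal{K}}(V\rho V^{\dagger})$ for some isometry $V:\mathcal{H}\to\mathcal{H}\otimes\mathcal{K}$. The key observation is that the preservation of the pure state $\hat{\psi}_{k}$ forces $V\ket{\psi_{k}}$ to be a product state: its Schmidt decomposition must collapse to a single term since the partial trace is pure, hence $V\ket{\psi_{k}}=\ket{\psi_{k}}\otimes\ket{\xi_{k}}$ for some unit vector $\ket{\xi_{k}}\in\mathcal{K}$. Comparing the isometry identity $\bra{\psi_{k}}V^{\dagger}V\ket{\psi_{k'}}=\braket{\psi_{k}}{\psi_{k'}}$ with the direct computation $\braket{\psi_{k}}{\psi_{k'}}\braket{\xi_{k}}{\xi_{k'}}$ yields $\braket{\xi_{k}}{\xi_{k'}}=1$, and hence $\ket{\xi_{k}}=\ket{\xi_{k'}}$, whenever $(k,k')\in E$. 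Connectivity of $G$ propagates this equality to a single common $\ket{\xi}$, and the spanning hypothesis extends $V=I\otimes\ket{\xi}$ by linearity to all of $\mathcal{H}$, so $\mathcal{E}$ is the identity channel.

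The main technical point is justifying the product structure $V\ket{\psi_{k}}=\ket{\psi_{k}}\otimes\ket{\xi_{k}}$; this rests on the standard fact that a bipartite pure state with a pure marginal must itself be a product, which is immediate from the Schmidt decomposition. Once this reduction is in place, the remainder is a clean interplay between graph connectivity and the inner-product preservation of $V$, so no serious calculation is expected.
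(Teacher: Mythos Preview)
Your proposal is correct and follows essentially the same approach as the paper's proof: dilation (you use the isometric Stinespring form, the paper uses the unitary form, but this is cosmetic), the pure-marginal-implies-product observation, inner-product preservation to force equality of ancilla states along edges, connectivity to propagate, and spanning to extend by linearity. For necessity the paper also argues by contrapositive with the same block-phase unitary construction; your decomposition into the cases ``not spanning'' versus ``spanning but disconnected'' is a minor organizational variant of the paper's treatment.
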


\begin{proof}
  Suppose that the input states span $\mathcal{H}$ and the graph
    $G$ is connected. By dilation, any CPTP map can be
  expressed in the form
  \begin{equation*}
    \mathcal{E}(\rho)=\mathrm{Tr}_2(U(\rho\otimes\ket{0}\bra{0})U^{\dag}),
  \end{equation*}
  for some ancillary state $\ket{0}$ and unitary $U$ on the joint
  input-ancilla system. Suppose that
    $\mathcal{E}(\hat\psi_{k})=\hat\psi_{k}$ for all $k$. Since
  $\Tr_{2}(U (\hat \psi_{k}\otimes \hat 0) U^{\dagger}) =
  \mathcal{E}(\hat\psi_{k}) = \hat\psi_k$ is pure,
  $U\ket{\psi_k}\ket{0}$ is a product state:
  $U\ket{\psi_k}\ket{0}=\ket{\psi_k}\ket{e_k}$, where $\ket{e_k}$ is
  an ancilla state which may depend on $k$.
  We prove that $\ket{e_{k}}$ is independent of $k$.  From the identity
  \begin{equation*}
    \bra{\psi_{k^{\prime}}}\ket{\psi_k}=
    \bra{0}\bra{\psi_{k^{\prime}}}U^{\dag}U\ket{\psi_k}\ket{0}=
    \bra{\psi_{k^{\prime}}}\ket{\psi_k}\bra{e_{k^{\prime}}}\ket{e_k}, 
  \end{equation*}
  it follows that if $k$ and $k^{\prime}$ are
  adjacent in $G$ then $\bra{e_{k^{\prime}}}\ket{e_k}=1$.
  Since $G$ is connected, we conclude that all
  the $\ket{e_k}$ are equal, and with $\ket{e}=\ket{e_{k}}$, we
    have $U\ket{\psi_k}\ket{0}=\ket{\psi_k}\ket{e}$ for all $k$. By
  linearity of $U$ and since the $\ket{\psi_k}$ span $\mathcal{H}$, it
  follows that $U\ket{\psi}\ket{0}=\ket{\psi}\ket{e}$ and
  $\mathcal{E}(\hat\psi)=\hat\psi$ for all pure states
  $\ket{\psi}$. By linearity of quantum processes,
  $\mathcal{E}(\rho)=\rho$ for all density matrices
  $\rho$, and $\mathcal{E}$ is the identity process.

   For the reverse implication, we prove the contrapositive.
    Suppose first that the input states do not span $\cH$. Let $S$ be
    the span of $\{\ket{\psi_k}\}$, and $S^{\perp}$ the orthogonal
    complement of $S$. Then $U=I_S\oplus i I_{S^{\perp}}$ has fidelity
    $1$ on all input states, but is not the identity process.  Next,
    suppose that $G$ is disconnected. Let $S$ be the span of the
    states $\ket{\psi_k}$ in a connected component of $G$.  Then
    $S\not=0$ and $S^{\perp}\not=0$, and again $U=I_S\oplus i
    I_{S^{\perp}}$ has fidelity one on the input states but is not the
    identity process.
\end{proof}

  Sets of input states that identify unitaries are also characterized
  by having trivial commutant, meaning $K=\{U\in SU(d): 
  [U,\hat{\psi}_k]=0\textrm{ for all } k\}=\{I\}$. Indeed, we show in the 
  appendix, Prop.~\ref{prop:uic=gc},
  that $K=\{I\}$ iff the input states are spanning and $G$ is connected.
  Our characterization is related to an observation made in Ref.~\cite{Reich2013}: if a set of
 states $\{\rho_k\}_k$ has trivial commutant, then every unitary $U$ is
 uniquely determined by its action on the states $U\rho_k U^{\dag}$. A set 
 of states with this property is called unitarily informationally complete 
 (UIC) \cite{Baldwin2014}. For pure input states and unitary processes, 
 the UIC property is equivalent to the property that if
$\mathcal{E}(\hat\psi_k)=\hat\psi_k$ for all $k$, then $U=I$.
Our Thm.~\ref{theorem1} together with the
  mentioned Prop.~\ref{prop:uic=gc} is therefore a strengthening of
  the observation from Ref.~\cite{Reich2013} above. In particular,
 for \textit{any}
 process $\mathcal{E}$, not just unitary processes, if the input states have trivial
  commutant, then having $\mathcal{E}(\hat{\psi}_k)=\hat{\psi}_k$ for all $k$ is
sufficient for $\mathcal{E}=I$. We remark
  that compared to checking for a trivial commutant, it is simpler to
  check the properties that the input states are spanning and $G$ is
  connected.
  
The authors of Ref.~\cite{Reich2013} also provided an example of a set
of $d+1$ pure states with the UIC property. This set contains the $d$
computational basis states $\ket{0},...,\ket{d-1}$, as well as the
``totally rotated state", defined as
$\ket{\psi_{tr}}=\frac{1}{\sqrt{d}}\sum_x\ket{x}$. The authors claimed
that this set contains the minimum number of pure
states required to uniquely determine a unitary process. However,
Thm.~\ref{theorem1} implies that $d$ states suffice. The
simplest example has $d=2$ and consists of any
two non-orthogonal pure states.

\section{Fidelity Lower Bounds}\label{sec: errors}

We have shown that the minimum number of input
states sufficient to ensure that the process fidelity equals unity in the limit
of no errors is equal to the dimension $d$. We now consider the case
of small non-zero errors $\epsilon_k$. Suppose that the input states are spanning and
  $G$ is connected.  We obtain a lower bound for
$F_{\mathrm{min}}(\bm{\epsilon})$ to lowest order in the $\epsilon_k$.
To describe the lower bound, order the input states so
that the first $d$ input states are spanning.
Let $M$ be the Gram matrix for the states $\{\ket{\psi_k}\}_{k=1}^d$, defined as the $d$-by-$d$
matrix with entries $M_{kk'}=\braket{\psi_k}{\psi_{k'}}$.
For the lower bound, we also need to introduce a minimum-weight
path quantity $W_{kk'}$ defined as follows:
Let $\mathcal{P}_{k k^{\prime}}$ denote the set of paths in $G$ from vertex $k$ to $k^{\prime}$. Then
\begin{equation}\label{weight}
    W_{kk'}=\min_{P\in\mathcal{P}_{kk'}}\sum_{(i,j)\in P}\abs{M_{ij}}^{-1/2}.
\end{equation}
With these definitions we can
  establish the following:
\begin{theorem}\label{thm:fid_lowerbnd}
  Let $\epsilon=\max_{k}\epsilon_{k}$. For all $\cE\in\cC(\bm{\epsilon})$,
  \begin{align*}
    F(\cE)&\ge 1 - C\sqrt{\epsilon} + O(\epsilon), \\
    C &= \frac{2}{d}\Big(\frac{2}{d}\sum_{d\geq k>k'\geq 1}W_{kk'}^2+\sum_{k=1}^d\sqrt{(M^{-1})_{kk}}\Big).\notag\\
    &\hphantom{\ge\;}
  \end{align*}
\end{theorem}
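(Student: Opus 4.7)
The plan is to lift $\cE$ via Stinespring as $\cE(\rho)=\mathrm{Tr}_2(U(\rho\otimes\hat 0)U^{\dag})$ on an enlarged system plus ancilla, and then extract geometric information about the ancilla from the state-fidelity constraints. For each input state the bound $\braket{\psi_k}{\cE(\hat\psi_k)\psi_k}\ge 1-\epsilon_k$ gives a decomposition $U\ket{\psi_k}\ket{0}=\alpha_k\ket{\psi_k}\ket{e_k}+\ket{\eta_k}$, where $\ket{\eta_k}$ is orthogonal to $\ket{\psi_k}\otimes\cH$, $\|\ket{\eta_k}\|\le\sqrt{\epsilon_k}$, and (after a global rotation) each $\alpha_k$ is real positive with $\alpha_k=1-O(\epsilon)$. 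The ancilla vectors $\ket{e_k}$ are the objects we need to track.

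The first substantive step is to use unitarity to pin the $\ket{e_k}$ together. Expanding the identity $M_{k'k}=\bra{0}\bra{\psi_{k'}}U^{\dag}U\ket{\psi_k}\ket{0}$ via the decomposition and bounding the three cross-terms involving $\ket{\eta_k},\ket{\eta_{k'}}$ by Cauchy--Schwarz yields $\alpha_{k'}\alpha_k M_{k'k}\braket{e_{k'}}{e_k}=M_{k'k}-R_{kk'}$ with $|R_{kk'}|=O(\sqrt\epsilon)$. On any edge $(k,k')$ of $G$ we may divide by $M_{k'k}\ne 0$ to obtain $|1-\braket{e_{k'}}{e_k}|\le O(\sqrt\epsilon/|M_{kk'}|)$, hence $\|\ket{e_k}-\ket{e_{k'}}\|\le O(\epsilon^{1/4}/|M_{kk'}|^{1/2})$. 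Triangle inequality along the minimum-weight path in $G$ then produces $\|\ket{e_k}-\ket{e_{k'}}\|\le 2\epsilon^{1/4}W_{kk'}+O(\sqrt\epsilon)$ for every pair of vertices.

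The second step is to compute $F(\cE)=\|\ket{\Xi}\|^{2}$ where $\ket{\Xi}=(\bra{\phi}\otimes I)(I\otimes U)(\ket{\phi}\otimes\ket{0})$ is a vector in the ancilla space. Writing $\ket{\phi}=d^{-1/2}\sum_{k=1}^{d}\ket{\tilde\psi_k^{*}}\ket{\psi_k}$ in terms of the dual basis $\ket{\tilde\psi_k}=\sum_{k'}(M^{-1})_{k'k}\ket{\psi_{k'}}$ of the first $d$ (spanning) input states, and inserting the Stinespring decomposition of each $U\ket{\psi_k}\ket{0}$, splits $\ket{\Xi}$ into a main term $\ket{u}=d^{-1}\sum_k\alpha_k\ket{e_k}$ and a residual $\ket{R}=d^{-1}\sum_k(\bra{\tilde\psi_k}\otimes I)\ket{\eta_k}$; the main-term collapse here uses $(M^{-1}M)_{kk}=1$. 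The residual satisfies $\|\ket{R}\|\le (\sqrt\epsilon/d)\sum_k\sqrt{(M^{-1})_{kk}}$ because $\|\ket{\tilde\psi_k}\|^{2}=(M^{-1})_{kk}$, while $\|\ket{u}\|^{2}\ge 1-(4\sqrt\epsilon/d^{2})\sum_{k>k'}W_{kk'}^{2}+O(\epsilon)$ follows by substituting $\mathrm{Re}\,\braket{e_{k'}}{e_k}\ge 1-2\sqrt\epsilon W_{kk'}^{2}+O(\epsilon)$ into the double sum for $\|\ket{u}\|^{2}$. Assembling $\|\ket{\Xi}\|^{2}\ge\|\ket{u}\|^{2}-2\|\ket{u}\|\|\ket{R}\|\ge\|\ket{u}\|^{2}-2\|\ket{R}\|$ then produces exactly the claimed constant $C$.

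The main obstacle is phase bookkeeping: absorbing the phases of the $\alpha_k$ into the $\ket{e_k}$ must be done once and globally so that each overlap $\braket{e_{k'}}{e_k}$ is close to $1$ as a complex number (not merely in modulus), which is what the triangle-inequality propagation of ancilla distances along paths requires. A secondary subtlety is the bound on $\|\ket{R}\|$: a termwise triangle inequality on the double sum $\sum_{k,k'}(M^{-1})_{kk'}\ket{\delta_{k'k}}$ is too weak to produce $\sqrt{(M^{-1})_{kk}}$, and one must instead recognize $\sum_{k'}(M^{-1})_{kk'}\bra{\psi_{k'}}$ as the dual-basis bra $\bra{\tilde\psi_k}$ acting on the partially specified vector $\ket{\eta_k}$.
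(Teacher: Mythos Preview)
Your proposal is correct and follows essentially the same approach as the paper: Stinespring dilation, the decomposition $U\ket{\psi_k}\ket{0}=\alpha_k\ket{\psi_k}\ket{e_k}+\ket{\eta_k}$, the unitarity identity to control $\braket{e_{k'}}{e_k}$ on edges, propagation along minimum-weight paths, and the dual-basis expansion of $\ket{\phi}$ to evaluate $F(\cE)$ with residual bounded by $\|\ket{\tilde\psi_k}\|=\sqrt{(M^{-1})_{kk}}$. The only cosmetic differences are that the paper phrases the path step as an overlap lemma (``if $\braket{b}{a}=1-\beta_1$ and $\braket{c}{b}=1-\beta_2$ then $\braket{c}{a}=1-\beta$ with $|\beta|\le(\sqrt{|\beta_1|}+\sqrt{|\beta_2|})^2$'') rather than as a triangle inequality on $\|\ket{e_k}-\ket{e_{k'}}\|$, and expands the norm-squared directly rather than splitting $\ket{\Xi}=\ket{u}+\ket{R}$; these are equivalent repackagings of the same argument.
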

  The proof of the theorem is in the appendix
  Sec.~\ref{ap:fid_lowerbnd}, where it is established by proceeding
  along the same lines as the proof of Thm.~\ref{theorem1} while
  explicitly keeping track of error terms to lowest order. A
  refinement of the bound taking into account non-constant
  $\epsilon_{k}$ is described at the end of the proof.

The quantity $W_{kk'}$ can be found in $O(N^2)$ time
with algorithms for minimum weighted
paths~\cite{Dijkstra1959}. Note that $W_{kk'}$ is large if two
adjacent states on the minimal path are nearly orthogonal. The matrix
$M$ is invertible if, as we assume, the states $\ket{\psi_k}$ span
$\cH$, and the diagonal entries of $M^{-1}$ are large if any two
states are nearly equal. The lower bound given by Thm.~\ref{thm:fid_lowerbnd} can thus be understood as quantitatively enforcing
the conditions of Thm.~\ref{theorem1}.

A few comments are in order. First, note that the lowest order term in
the lower bound is of order $\sqrt{\epsilon}$. This scaling behavior
matches our empirical observations from numerically solving the SDP
given by Eq.~\ref{SDP}. However, we find that for
many sets of input states containing more
than $d$ states, the $O(\sqrt{\epsilon})$ term vanishes and the
process infidelity becomes linear in $\epsilon$ for small
error. Examples include the basis states plus the totally rotated
state, as well as the symmetric POVM states defined in the next
section. The transition from sub-linear to linear scaling is not
explained by the proof of Thm.~\ref{thm:fid_lowerbnd} and thus
remains an open question.  Second, the bound in Thm.~\ref{thm:fid_lowerbnd} is not tight. Fig.~\ref{fig:rootEpsRand}
compares the upper bound for the $O(\sqrt{\epsilon})$ term with its true value obtained via SDP, for 100 random sets of
$N=d$ input states of dimensions $d=4,8$. The $O(\sqrt{\epsilon})$ terms were obtained by computing $F_{\min}(\epsilon)$ for $\epsilon$ varying between $10^{-5}$ to $10^{-4}$ in steps of $10^{-5}$, and performing a linear least squares best fit. The plot shows that the
bound tends to overestimate the process infidelity by a factor of approximately two
  for these dimensions, and that the fractional discrepancy decreases
  as the $O(\sqrt{\epsilon})$ term increases. Improving the lower bound 
  Thm.~\ref{thm:fid_lowerbnd} remains an open problem.
  
\begin{figure}
    \centering
    \includegraphics[scale=.50]{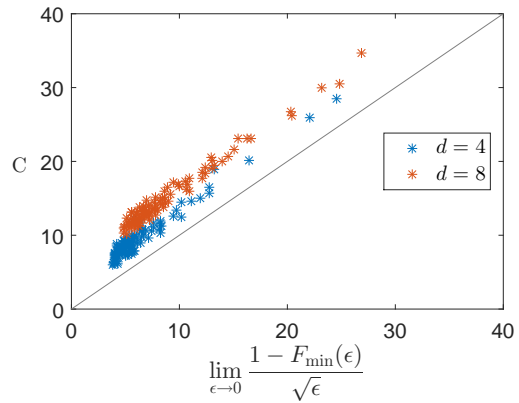}
    \caption{Scatter plot of the
      $\sqrt{\epsilon}$ coefficient in the expansion of
      $F_{\min}(\epsilon)$ inferred via SDP against the quantity $C$ in 
      Thm.~\ref{thm:fid_lowerbnd}, which is an upper bound on the $\sqrt{\epsilon}$ coefficient. Each data point corresponds to
      a set of $d$ Haar random input states. 
      A gray line with slope $m=1$ is included to aid the comparison.}
    \label{fig:rootEpsRand}
\end{figure}

\section{Symmetric POVM input states}\label{SINC POVM}

In this section, we prove lower and upper bounds on the process
fidelity for a set of $N$ input states whose one-dimensional
projectors form a symmetric positive operator-valued measure
(POVM). Such sets are also referred to as equiangular tight
frames \cite{Sustik2007}. We show that for such sets of input states,
$F_\mathrm{min}$ is bounded by a linear function of the errors $\epsilon_k$. Our
motivation for studying
symmetric POVM input states is that they are in a sense
maximally spread out in the Hilbert space $\mathcal{H}$, and are
therefore good candidates for yielding
the tightest possible bounds for a given $N$.

The set of input states forms a symmetric POVM if its states have constant pairwise overlap and the sum of the
input projectors is proportional to the identity. That is, the input states satisfy that for some
  constant $c\geq 0$ and for all $k\not=k'$
\begin{align}
  \abs{\bra{\psi_k}\ket{\psi_{k^{\prime}}}}^2 &= c,
     \label{state overlap}\\
    \frac{d}{N}\sum_k \ket{\psi_k}\bra{\psi_k} &= I,  \label{POVM}
\end{align}
where the factor $\frac{d}{N}$ is determined by matching the traces of
the two sides of the identity.  
By squaring Eq.~\ref{POVM} and taking the trace, the
constant in Eq.~\ref{state overlap} is found to be
\begin{equation}\label{constant c}
c = \frac{N-d}{d(N-1)}.
\end{equation}

  Conventionally, a POVM consists of a family of positive
  semidefinite hermitian operators summing to the identity. We
  slightly abused the terminology in referring to the set of input
  states as a POVM. The conventional POVM formed from the input states
  satisfying Eqs.~\ref{state overlap} and~\ref{POVM} consists of the
  operators $\frac{d}{N}\hat \psi_{k}$. If the set of input states forms a
  symmetric POVM, then the input states are spanning. If $N>d$, the
graph $G$ defined at the beginning
  of Sec.~\ref{sec:level2} is a complete graph.  When $N=d^2$, the
input projectors $\hat\psi_k$ span $B(\cH)$
and therefore form a symmetric informationally complete (SIC) POVM
\cite{Renes2004}. At the other extreme, the smallest non-trivial
symmetric POVM occurs when $N=d+1$, because for $N=d$ we
have $c=0$ and $G$ is not connected. A set of $d+1$ states
satisfying Eqs.~\ref{state overlap} and \ref{POVM} is called a
\textit{simplex}.

Whereas SIC POVMs are conjectured but not proven to exist in all
dimensions~\cite{Fuchs2017}, we give an
explicit construction of a simplex.  Let $\omega = \exp(2\pi i/(d+1))$
be a $(d+1)^{\mathrm{th}}$ root of unity. For $k\in\{0,1,..,d\}$,
define
\begin{equation}\label{def psi k}
    \ket{\psi_k}=\frac{1}{\sqrt{d}}\sum_{x=0}^{d-1}\omega^{kx}\ket{x}.
\end{equation}
By direct calculation one can confirm that
Eqs.~\ref{state overlap} and~\ref{POVM} are satisfied.

Symmetric POVM input states have the nice
property that $F_{\min}(\epsilon)$ is linear for constant $\epsilon_k=\epsilon$. 
An example is shown in Fig.~\ref{fig:fidelity vs error}, which shows
  $F_{\min}(\epsilon)$ when the set of input states is a simplex.
  The figure also shows
  $F_{\min}(\epsilon)$ for the set of input states
$\{\ket{0},...,\ket{d-1},\ket{\psi_{tr}}\}$ from
Ref.~\cite{Reich2013}, which is not linear and has a more negative
  slope as $\epsilon$ goes to zero. This demonstrates
that the simplex is a better choice of $d+1$ input states for
obtaining lower bounds on the process fidelity. We
conjecture that symmetric POVM states are optimal among
all sets of $N$ input states in this regard.

\begin{figure}
    \centering
    \includegraphics[scale=.50]{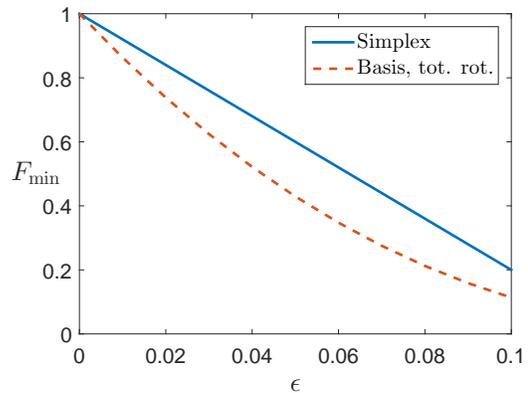}
    \caption{Plot of $F_{\mathrm{min}}(\epsilon)$ for two sets of input states in $d=8$ dimension. The solid curve is for the simplex, the dashed for
 the standard basis together with the totally rotated state defined in the text.
 Note that $F_{\mathrm{min}}$ is convex.} 
    \label{fig:fidelity vs error}
\end{figure}

Our main result on the performance of symmetric POVMs is a general,
  linear bound on $F(\cE)$. After the proof we show
  that the lower bound is tight for $\epsilon_{k}=\epsilon$ constant.
\begin{theorem}\label{theorem2}
    Suppose that the set of input states with
    $N>d$ forms a symmetric POVM and let
  $\mathcal{E}$ be a CPTP map such that
  $1-u_k\ge\bra{\psi_k}\mathcal{E}(\hat\psi_k)\ket{\psi_k}\ge
  1-\epsilon_k$ for all $k$. Then 
  \begin{equation*}
      1-\bar{u}\ge F(\mathcal{E})\ge 1 -
  \bigg(\frac{N-1}{N-d}\bigg)\,\bar{\epsilon},
  \end{equation*}
  where $\bar{\epsilon}$ and
  $\bar{u}$ are the means of the $\epsilon_k$ and $u_k$.
\end{theorem}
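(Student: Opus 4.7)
The plan is to sum the $k$ state-fidelity constraints and recognize the resulting Choi-side operator as almost proportional to $\hat\phi$. I would first introduce the bipartite vectors $\ket{\Psi_k}=\ket{\psi_k^*}\ket{\psi_k}$, so that $\ket{\Psi_k}\bra{\Psi_k}=\hat\psi_k^{\intercal}\otimes\hat\psi_k$. Using Eq.~\ref{max ent prop} one obtains $\ket{\Psi_k}=\sqrt{d}\,(I\otimes\hat\psi_k)\ket{\phi}$, which gives $\bra{\psi_k}\cE(\hat\psi_k)\ket{\psi_k}=d\,\bra{\Psi_k}\chi\ket{\Psi_k}$ and $\bra{\Psi_k}\ket{\phi}=1/\sqrt{d}$. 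Combined with the POVM condition Eq.~\ref{POVM}, these imply $\sum_k\ket{\Psi_k}=(N/\sqrt{d})\ket{\phi}$.

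Next I would diagonalize $\Lambda\equiv\sum_k\ket{\Psi_k}\bra{\Psi_k}$. Equiangularity gives $\bra{\Psi_k}\ket{\Psi_{k'}}=c+(1-c)\delta_{kk'}$, from which $\Lambda\ket{\phi}=(N/d)\ket{\phi}$ follows at once. For any $\ket{v}=\sum_k\alpha_k\ket{\Psi_k}$ with $\sum_k\alpha_k=0$ (equivalent to $\ket{v}\perp\ket{\phi}$ inside $\mathrm{span}\{\ket{\Psi_k}\}$), the cross terms cancel and a short calculation gives $\Lambda\ket{v}=(1-c)\ket{v}$. Hence $\Lambda=\frac{N}{d}\hat\phi+(1-c)Q$, where $Q$ is the projector onto $\mathrm{span}\{\ket{\Psi_k}\}\cap\ket{\phi}^{\perp}$. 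Crucially, $Q$ and $\hat\phi$ are mutually orthogonal projectors, so $Q+\hat\phi\preceq I$. This spectral identification of $\Lambda$ is the main nontrivial step; everything after is bookkeeping.

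Taking $\mathrm{Tr}(\chi\,\cdot)$ of the decomposition and substituting $c=(N-d)/(d(N-1))$, so that $d(1-c)/N=(d-1)/(N-1)$, produces the identity
\begin{equation*}
\frac{1}{N}\sum_k\bra{\psi_k}\cE(\hat\psi_k)\ket{\psi_k}=F(\cE)+\frac{d-1}{N-1}\,\mathrm{Tr}(\chi Q).
\end{equation*}
The upper bound is then immediate: $\mathrm{Tr}(\chi Q)\ge 0$ together with $\bra{\psi_k}\cE(\hat\psi_k)\ket{\psi_k}\le 1-u_k$ gives $F(\cE)\le 1-\bar u$. For the lower bound, trace preservation forces $\mathrm{Tr}(\chi)=1$, and $Q+\hat\phi\preceq I$ then bounds $\mathrm{Tr}(\chi Q)\le 1-F(\cE)$. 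Substituting this and the assumed $\frac{1}{N}\sum_k\bra{\psi_k}\cE(\hat\psi_k)\ket{\psi_k}\ge 1-\bar\epsilon$ into the identity gives $1-\bar\epsilon\le F(\cE)+\frac{d-1}{N-1}(1-F(\cE))$, which rearranges to $F(\cE)\ge 1-\frac{N-1}{N-d}\bar\epsilon$.
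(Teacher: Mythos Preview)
Your proof is correct and follows essentially the same route as the paper's. Both arguments hinge on the spectral decomposition $\Gamma=\sum_k\hat\psi_k^{\intercal}\otimes\hat\psi_k=\tfrac{N}{d}\hat\phi+(1-c)Q$ (the paper obtains the eigenvalues via $B^{\dag}B$, you via the action of $\Lambda$ on $\ket{\phi}$ and on $\mathrm{span}\{\ket{\Psi_k}\}\cap\ket{\phi}^{\perp}$), and then use $Q\perp\hat\phi$ together with $Q+\hat\phi\preceq I$. The only cosmetic difference is packaging: the paper rewrites the lower bound as $\mathrm{Tr}(\chi A)\ge 0$ for $A=\hat\phi+(\tfrac{1}{dc}-1)I-\tfrac{1}{Nc}\Gamma$ and computes $A=\tfrac{d-1}{N-d}\Pi_{\perp}\succeq 0$, whereas you state the scalar identity $\tfrac{1}{N}\sum_k\bra{\psi_k}\cE(\hat\psi_k)\ket{\psi_k}=F(\cE)+\tfrac{d-1}{N-1}\mathrm{Tr}(\chi Q)$ and bound $0\le\mathrm{Tr}(\chi Q)\le 1-F(\cE)$ directly; unpacking the paper's $\mathrm{Tr}(\chi\Pi_{\perp})\ge 0$ gives precisely your bound $\mathrm{Tr}(\chi Q)\le 1-F(\cE)$.
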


\begin{proof}
  We first prove that $F(\cE)\geq
  1-(\frac{N-1}{N-d})\,\bar{\epsilon}=1-\frac{1}{dc}\bar{\epsilon}$, where
    $c$ is defined in Eq.~\ref{constant c}. We apply the
    assumed bounds and Eq.~\ref{process output} to obtain
  \begin{align*}
    (1-\epsilon_{k})&\le \bra{\psi_{k}}\cE(\hat\psi_{k})\ket{\psi_{k}}\notag\\
    &= d\, 
    \Tr(\hat\psi_{k}\Tr_{1}\left(\chi\,\left(\hat\psi_{k}^{\intercal}\otimes
      I\right)\right))\notag\\
    &= d\,\Tr(\chi\,\left(\hat\psi_k^{\intercal}\otimes\hat\psi_k\right)).
  \end{align*}
  Define $\Gamma=\sum_k\hat\psi_k^{\intercal}\otimes\hat\psi_k$.
   Summing the inequality just obtained over $k$ and dividing by $N$ gives
    $1-\bar{\epsilon} \le \frac{d}{N} \mathrm{Tr}(\chi \Gamma)$, which
    is equivalent to
  \begin{equation}
    1-\frac{1}{dc}\bar{\epsilon}\le
    1-\frac{1}{dc} + \frac{1}{Nc}\mathrm{Tr}(\chi \Gamma).
    \label{eq:lowerbnd1}
  \end{equation}
    The inequality to
    be proven follows once we show that $F(\cE)$ is bounded below by
    the right-hand-side. Since $F(\mathcal{E})=\mathrm{Tr}(\chi\hat\phi)$,
    this is implied by
  \begin{equation*}
    \mathrm{Tr}(\chi\hat\phi)\ge 1-\frac{1}{dc} + 
    \frac{1}{Nc}\mathrm{Tr}(\chi \Gamma).
  \end{equation*}
  After moving everything to the left-hand-side and defining
    \begin{equation*}
      A = \hat\phi+\left(\frac{1}{dc}-1\right)\,I-\frac{1}{Nc}\Gamma,
    \end{equation*}
    we can see that the desired inequality is equivalent to $\mathrm{Tr}(\chi A)\ge0$,
    and it suffices to prove that $A$ is
    positive semidefinite. For this purpose, we determine the spectral
    decomposition of $\Gamma$. We can write $\Gamma=BB^{\dag}$ with $B$ given by
  \begin{equation*}
    B = \sum_{k=1}^{N}\left(\ket{\psi_k^{*}}\otimes\ket{\psi_k}\right)\bra{k},
  \end{equation*}
    where $\ket{\psi_{k}^{*}}$ is the complex conjugate of
    $\ket{\psi_{k}}$ relative to the standard basis, and we used
    $\hat\psi_{k}^{*}=\hat\psi_{k}^{\intercal}$. $B$ is a matrix of dimension $d^{2}\times N$ and 
    the spectrum of $BB^{\dag}$ is the same as that of $B^{\dag}B$, which
    is the $N\times N$ matrix whose $k,k'$ entry is given by
    \begin{equation*}
      \abs{\bra{\psi_{k}}\ket{\psi_{k'}}}^{2}
      =((1-c)\delta_{kk'} + c).
    \end{equation*}
    With respect to the basis consisting of the
    $\ket{k}$, this is a matrix whose diagonal entries are
    $1$ and whose off-diagonal entries are $c$. Such a matrix has two
    eigenvalues: the first is $(N-1)c+1 = \frac{N}{d}$ corresponding
    to the eigenvector with constant entries, and the second is
    $(1-c)$ with multiplicity $N-1$.  Accordingly, we can write
    \begin{equation}
      \Gamma=BB^{\dag}=\frac{N}{d}\hat\phi'+(1-c)\Pi,\label{eq:lowerbnd3}
    \end{equation}
    where $\hat\phi'$ is a rank-one projector and $\Pi$ is a
    rank $N-1$ projector orthogonal to $\hat\phi'$.
    We determine that $\hat\phi'=\hat\phi$ by verifying that
  $\ket{\phi}$ is an eigenstate of $\Gamma$ with eigenvalue $N/d$:
  From Eqs.~\ref{max ent prop} and~\ref{POVM},
  \begin{align*}
    \Gamma\ket{\phi} &=   
    \sum_k(\hat\psi_k^{\intercal}\otimes\hat\psi_k)\ket{\phi}\\
    &=\sum_k(I\otimes\hat\psi_k)\ket{\phi}\\
    &=\frac{N}{d}\ket{\phi}.
  \end{align*}
  Let $\Pi_{\perp}=I-\hat\phi-\Pi$ be the projector
    onto the nullspace of $\Gamma$. We can now write $A$ as
    \begin{align*}
      A &=
      \hat\phi+\left(\frac{1}{dc}-1\right)(\hat\phi+\Pi+\Pi_{\perp})\\
      &\hphantom{=\;}-\left(\frac{1}{dc}\hat\phi+\frac{1-c}{Nc}\,\Pi\right)\\
      &= \frac{d-1}{N-d}\,\Pi_{\perp},
    \end{align*}
    since $\frac{1}{dc}-1 =\frac{1-c}{Nc} = \frac{d-1}{N-d}$.
    Thus $A$ is positive semidefinite as claimed.

  The proof that $F(\cE)\leq 1-\bar u$ follows the same strategy.
    Instead of Eq.~\ref{eq:lowerbnd1}, from the upper bound on the
    input state fidelities we derive $1-\bar{u}\ge
    \frac{d}{N}\mathrm{Tr}(\chi \Gamma)$.  The inequality to be proven
    now is implied by $\mathrm{Tr}(\chi\hat\phi)\le
    \frac{d}{N}\mathrm{Tr}(\chi \Gamma)$, so it suffices to show that
    $\Tr(\chi A')\leq 0$ with
    $A'=\hat\phi-\frac{d}{N}\Gamma$. Applying Eq.~\ref{eq:lowerbnd3}
    gives $A' = -\frac{d}{N}(1-c)\Pi$, which is negative semidefinite
    since $c<1$.
  \end{proof}

Numerical solutions indicate that the lower bound
$F(\cE)\ge 1-(\frac{N-1}{N-d})\bar{\epsilon}$ of Thm.~\ref{theorem2} is not tight.
Determining $F_{\min}(\bm{\epsilon})$ for symmetric POVM input states and general
$\epsilon_k$ remains an open problem. However, if $\epsilon_k=\epsilon$ for all $k$,
then the lower bound is tight and achieved by the quantum channel
\begin{equation}\label{min channel}
    \mathcal{E}(\rho)=
    \left(1-p\right)\rho+p\frac{d}{N}\sum_k\hat\psi_k\rho\,\hat\psi_k,
\end{equation}
where $p=\frac{d(N-1)}{(d-1)(N-d)}\epsilon$.
 The Kraus operators for $\mathcal{E}$ are $\sqrt{1-p}\,I$ and
  $\sqrt{p\frac{d}{N}}\,\hat\psi_{k}$ for $k=1,\ldots,N$. We verify that
  $\cE$ satisfies $\bra{\psi_{k}}\cE(\hat\psi_{k})\ket{\psi_{k}}=1-\epsilon$ for
  all $k$  and $F(\mathcal{E})=1-(\frac{N-1}{N-d})\epsilon$:
  \begin{align*}
    \bra{\psi_{k}}\cE(\hat\psi_{k})\ket{\psi_{k}} &=
    (1-p)+ p\frac{d}{N}\sum_{k'}\abs{\bra{\psi_{k}}\ket{\psi_{k'}}}^{4}\\
    &= (1-p) + p\frac{d}{N}\big(1+(N-1)c^2\big)\\
    &=1-p\frac{1}{N}\left(N-d-\frac{(N-d)^{2}}{d(N-1)}\right)\\
    &=1-p\frac{N-d}{dN(N-1)}\left(d(N-1)-(N-d)\right)\\
    &=1-p\frac{(N-d)(d-1)}{d(N-1)}\\
    &=1-\epsilon,
  \end{align*}
  and 
  \begin{align*}
    F(\cE) &= \bra{\phi}(I\otimes\cE)(\hat\phi)\ket{\phi}\\
    &= (1-p) + p\frac{d}{N}
    \sum_{k}\abs{\bra{\phi}(I\otimes\hat\psi_{k})\ket{\phi}}^{2}\\
    &= (1-p) + p\frac{d}{N}\sum_{k}\frac{1}{d^{2}}\\
    &= 1-p\frac{d-1}{d}\\
    &= 1-\frac{N-1}{N-d}\epsilon.
  \end{align*}
When $N=d^2$, symmetric POVM input states form a SIC POVM and therefore also a
2-design~\cite{Renes2004}. An argument similar to that in
Ref.~\cite{Nielsen2002} shows that the fidelity minimizing 
channel $\cE$ defined in Eq.~\ref{min channel} is the depolarizing channel
\begin{equation*}
\cE(\rho)=(1-q)\rho+\frac{q}{d}I,
\end{equation*}
with $q=\frac{d}{d-1}\epsilon$. For the simplex, when
$N=d+1$, the fidelity minimizing channel is in general more
difficult to interpret. For the case of $d=2$ and with the explicit simplex states
given in Eq.~\ref{def psi k},
\begin{equation*}
    \mathcal{E}(\rho)=(1-2\epsilon)\rho+\epsilon(\sigma_{x}\rho \sigma_{x}+\sigma_{y}\rho \sigma_{y}),
\end{equation*}
 where the $\sigma_{u}$ are the standard Pauli matrices.
  As can be seen, $\cE$ is a sum of
  the $x$ and $y$ dephasing channels. In a Bloch-sphere-deformation picture, the effect
  is to maximize contraction parallel to the $z$-axis while keeping
  contraction parallel to the other axes fixed. The $z$-axis
  contraction is limited by the ``no pancake theorem"
\cite{Blume-Kohout2010}, which states that there is no quantum channel
that projects the Bloch sphere onto the $x-y$ plane.

\section{Comparison to Hofmann bounds}

 Consider $N=d+1$ with $d=2^{n}$, where the state space is that of
  $n$ qubits. The set of simplex input states of Eq.~\ref{def psi k} can be
  used in an efficient experimental procedure for bounding the
  fidelity of a process. These input
states factor according to
\begin{equation*}\label{psik factorization}
\ket{\psi_k}=\bigotimes_{j=1}^n \frac{1}{\sqrt{2}}(\ket{0}+e^{i\pi k \frac{2^{n+1-j}}{d+1}}\ket{1}),
\end{equation*}
and can therefore be prepared with
one-qubit Hadamard gates and rotations about the $z$-axis. If the measured state
  fidelities satisfy
$\bra{\psi_k}\mathcal{E}(\hat{\psi}_k)\ket{\psi_k}=1-\epsilon_{k}$, 
then according to Thm.~\ref{theorem2}
the process fidelity is bounded by
\begin{equation}\label{my bounds}
  1-d\,\bar{\epsilon}\le F(\mathcal{E})\le 1-\bar{\epsilon}.
\end{equation}
We compare these bounds to those given by Hofmann~\cite{Hofmann2005}, which 
require as input states the members of
 two mutually unbiased bases (MUBs). 
A particular pair of such bases consists of
the computational basis $\{\ket{x}\}$ and
its Fourier transform $\{\ket{f_x}\}$ given by
\begin{equation*}
  \ket{f_x}=\frac{1}{\sqrt{d}}\sum_y e^{2\pi i x y/d}\ket{y}.
\end{equation*}
Thus, the Hofmann bounds require $2d$ input states, a quadratic
improvement over full process tomography in the number of states
needed to probe the fidelity of a process. 
The bounds are determined by the two classical fidelities
\begin{align*}
    F_1 &= \frac{1}{d}\sum_x\bra{x}\mathcal{E}(\hat{x})\ket{x},\\
    F_2 &= \frac{1}{d}\sum_x\bra{f_x}\mathcal{E}(\hat{f}_x)\ket{f_x},
\end{align*}
  in terms of which they are given by
\begin{equation}\label{hofmann bounds}
    F_1+F_2-1\le F(\mathcal{E})\le\mathrm{min}\{F_1,F_2\}.
\end{equation}
 Suppose that the fidelities for the input states used to apply the
  Hofmann bounds are $1-\epsilon_{k}$. Then
  $F_{1}+F_{2}-1=1-2\bar\epsilon$ and if $\epsilon_{k}=\epsilon$ is
  constant, $\mathrm{min}\{F_{1},F_{2}\}=1-\bar\epsilon$. For
  comparison, according to Thm.~\ref{theorem2}, a symmetric POVM with
$N=2d$ yields lower and upper bounds of
$1-(\frac{2d-1}{d})\bar\epsilon$ and $1-\bar\epsilon$. Assuming identical average errors,
  the lower bound is slightly tighter. The set of simplex input
  states consist of $d+1$ states, further reducing
the number of input states by a factor approaching two. Because fewer input states are used, the
bounds obtained with the simplex are looser than the
Hofmann bounds. However, the improvement obtained
from the Hofmann bounds depends on the particular process
$\mathcal{E}$. For instance, if the system is subject to an error
channel that is a depolarizing channel $\mathcal{D}(\rho) = (1-p)\rho
+ \frac{p}{d}I,$ with
$\bra{\psi_k}\mathcal{D}(\hat{\psi}_k)\ket{\psi_k}=1-\epsilon$, then for the
simplex input states one finds that $F_1=F_2=1-\epsilon$ and so
the Hofmann bounds are
\begin{equation*}
1-2\epsilon\le F(\mathcal{D})\le 1-\epsilon.
\end{equation*}
The width of the interval between the lower and upper Hofmann bounds
is smaller than that of Eq.~\ref{my bounds} by a factor of $d-1$, so
the advantage gained from using more input states grows
linearly with the dimension.
However, if the system encounters errors described by the process in Eq.~\ref{min channel}, the classical fidelities are $F_1=1-d\epsilon$ and $F_2=1-\frac{d+1}{3}\epsilon$ (see appendix), giving the bounds
\begin{equation}\label{min chan hoff bounds}
1-\Big(\frac{4d+1}{3}\Big)\epsilon\le F(\mathcal{E})\le 1-d\,\epsilon.
\end{equation}
In this case the Hofmann bounds are tighter than the bounds in
Eq.~\ref{my bounds} by a factor
  approaching three for large dimensions. Interestingly, for the process given
by Eq.~\ref{min channel}, the upper bound obtained from Eq.~\ref{hofmann bounds} and the lower bound from Eq.~\ref{my bounds} coincide. So for
this particular channel, the classical fidelities for the Hoffman input states together with the average fidelity for the simplex input states 
determine the process fidelity exactly.

\section{Conclusion}
    
  We have characterized sets of pure input states that identify
  unitary processes, and determined that the minimum number of
  states required is equal to the Hilbert space dimension $d$.
  We obtained a lower bound on $F(\cE)$ for small $\epsilon$ of the form 
  $1- C\sqrt{\epsilon}-O(\epsilon)$
  (Thm.~\ref{thm:fid_lowerbnd}). We have also proven bounds on $F(\cE)$ for symmetric POVM
  input states and shown that the lower bound is achieved for constant
  $\epsilon_k=\epsilon$. When $N=2d$, these bounds are slightly tighter than the Hofmann
  bounds obtained from a set of input states consisting of two MUBs. The smallest set of
  symmetric POVM input states which identifies unitaries is the simplex, with $N=d+1$. For
  qubit systems where $d=2^n$, simplex input states can be prepared with a circuit
  containing only Hadamard gates and individual $z$-axis rotations. However, the bounds
  obtained are in general much looser than the Hofmann bounds.

  There are a number of open problems to be investigated. 
  As noted, the bound given by
  Thm.~\ref{thm:fid_lowerbnd} is not
  tight. Is there a tight bound expressed analytically in terms of the
  input states? What property of the input states determines the
  vanishing of the $O(\sqrt{\epsilon})$ term? Another  
  open question is to find $F_{\min}(\bm{\epsilon})$ and the
  fidelity minimizing channel for symmetric POVM input states and arbitrary 
  $\bm{\epsilon}$. A general
  problem is to determine, given $N$ and $\bar\epsilon$ or
  $\max_{k}\epsilon_{k}$, the maximum of $F_{\min}$ over all sets of
  input states of size $N$. Instead of the maximum $F_{\min}$ one can
  seek the minimum $F_{\max}$ or $F_{\max}-F_{\min}$.  We conjecture
  that symmetric POVM states are optimal among all sets of $N$ input
  states, but numerical evidence suggests that symmetric POVMs do not
  exist for many $N$ with $d+1<N<d^2$~\cite{Tropp2005,Sustik2007,Fickus2015}.
  Finally, we observed 
  that a set of input states containing both two MUBs and the 
  simplex states determined $F(\cE)$ for the channel 
  Eq.~\ref{min channel}. This suggests the question of characterizing
  sets of input states and $\bm{\epsilon}$ that together determine the process fidelity exactly.
    
\begin{acknowledgments}
  The authors thank Charles Baldwin, Graeme Smith, Felix Leditzky, 
  and Scott Glancy for helpful conversations and comments on the manuscript.
  This work includes contributions of the
  National Institute of Standards and Technology, which are not
  subject to U.S. copyright. The identification of any product or trade names
  is for informational purposes and does not imply
  endorsement or recommendation by NIST.
\end{acknowledgments}

\appendix

\section{Equivalence of UIC and graph connectivity}

\begin{proposition}\label{prop:uic=gc}
For a set of pure states $\{\ket{\psi_k}\}$, let $G=(V,E)$ be the graph with $V=\{1,\ldots,k\}$ and
$E=\{(k,k'):\bra{\psi_k}\ket{\psi_{k'}}\ne0\}$. The following two conditions are equivalent:\\

1. If $U\in U(d)$ and $[U,\hat{\psi}_k]=0$ for all $k$, then $U\propto I$\\

2. The $\ket{\psi_k}$ span $\mathcal{H}$ and the graph $G$ is connected.
\end{proposition}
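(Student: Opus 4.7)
The plan is to prove each implication directly, using that $[U,\hat\psi_k]=0$ together with $\hat\psi_{k}$ being rank one forces $\ket{\psi_{k}}$ to be an eigenvector of $U$.

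For the forward direction $(2)\Rightarrow(1)$, I would first observe that if $U$ commutes with the rank-one projector $\hat\psi_{k}$, then $\ket{\psi_{k}}$ must be an eigenvector of $U$, say $U\ket{\psi_{k}}=\lambda_{k}\ket{\psi_{k}}$ with $|\lambda_{k}|=1$. Next, for any edge $(k,k')\in E$, compute $\bra{\psi_{k}}U\ket{\psi_{k'}}$ in two ways: it equals $\lambda_{k'}\braket{\psi_{k}}{\psi_{k'}}$ by acting $U$ to the right, and equals $\lambda_{k}\braket{\psi_{k}}{\psi_{k'}}$ by noting that $\bra{\psi_{k}}U=\lambda_{k}\bra{\psi_{k}}$ (because $U^{\dagger}\ket{\psi_{k}}=\bar\lambda_{k}\ket{\psi_{k}}$). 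Since $\braket{\psi_{k}}{\psi_{k'}}\neq 0$ on edges, this gives $\lambda_{k}=\lambda_{k'}$. Connectedness of $G$ then propagates this to a single common phase $\lambda$, and the spanning hypothesis upgrades $U\ket{\psi_{k}}=\lambda\ket{\psi_{k}}$ for all $k$ to $U=\lambda I$.

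For the reverse direction $(1)\Rightarrow(2)$, I would argue the contrapositive by explicitly constructing a commuting unitary that is not a scalar multiple of $I$ whenever either spanning or connectedness fails. If the $\ket{\psi_k}$ do not span, let $S$ be their span and set $U=I_{S}\oplus e^{i\theta}I_{S^{\perp}}$ for any $\theta\not\equiv 0\pmod{2\pi}$; this leaves each $\ket{\psi_{k}}$ fixed and hence commutes with each $\hat\psi_{k}$, but is not proportional to $I$. If instead $G$ is disconnected, the key observation is that two states in distinct connected components must be orthogonal (otherwise they would be adjacent), so the spans $S_{1},S_{2},\ldots$ of the components are mutually orthogonal. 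Pick two distinct phases $e^{i\theta_{1}}\ne e^{i\theta_{2}}$ and define $U$ to act as $e^{i\theta_{j}}I_{S_{j}}$ on each $S_{j}$ (and, say, trivially on any remaining orthogonal complement); again $U$ commutes with each $\hat\psi_{k}$ but is not a scalar.

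There is no real obstacle here: the argument is a straightforward adaptation of the reasoning already used in the proof of Thm.~\ref{theorem1}. The only point that requires a moment of care is the observation in the disconnected case that distinct components of $G$ automatically correspond to mutually orthogonal subspaces, since this is what makes the block-diagonal unitary well defined as a genuine unitary on $\mathcal{H}$. Everything else is linear algebra, and the spanning and connectivity hypotheses enter the forward direction in precisely complementary roles: connectivity forces agreement of eigenvalues between states, while spanning forces agreement between $U$ and $\lambda I$ on all of $\mathcal{H}$.
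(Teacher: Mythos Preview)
Your proposal is correct and follows essentially the same approach as the paper: in both proofs $(2)\Rightarrow(1)$ proceeds by showing each $\ket{\psi_{k}}$ is an eigenvector of $U$, using adjacency in $G$ to equate the eigenvalues and then spanning to conclude $U\propto I$, while $(1)\Rightarrow(2)$ is proved by contrapositive via a block-diagonal unitary $I_{S}\oplus e^{i\theta}I_{S^{\perp}}$. Your version is slightly more explicit than the paper's in noting that the spans of distinct components are mutually orthogonal, which is exactly what is needed for the block-diagonal construction to commute with every $\hat\psi_{k}$.
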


\begin{proof}
  (1$\implies$2) This direction is essentially the same
  as the only-if part of Thm.~\ref{theorem1}.
  We prove the contrapositive. Suppose first that
  the states $\ket{\psi_k}$ do not span $\cH$. Let $S$ be the span of
  $\{\ket{\psi_k}\}$, and $S^{\perp}$ the orthogonal complement of
  $S$. Then $U=I_S\oplus i I_{S^{\perp}}$ commutes with all
  $\hat{\psi}_k$ but is not proportional to the
  identity.  Next, suppose that
  $G$ is disconnected. Let $S$ be the span of the states
  $\ket{\psi_k}$ in a connected component of $G$.
  Then $S\not=0$ and $S^{\perp}\not=0$, and again $U=I_S\oplus i
  I_{S^{\perp}}$ commutes with all $\hat{\psi}_k$ but is
  not proportional to the identity.
  
  (2$\implies$1) Suppose that $U$ commutes with all $\hat{\psi}_k$.
  From $U\ket{\psi_k}\bra{\psi_k}=\ket{\psi_k}\bra{\psi_k}U$ it
  follows that $U\ket{\psi_k}=\omega_k\ket{\psi_k}$ with
  $\omega_k=\bra{\psi_k}U\ket{\psi_k}$. Therefore,
  $\braket{\psi_{k'}}{\psi_k}=
  \bra{\psi_{k'}}U^{\dagger}U\ket{\psi_k}=
  \omega_{k'}^*\omega_k\braket{\psi_{k'}}{\psi_k}$.
  If $k$ and $k'$ are adjacent in $G$, we can divide both sides by
  $\braket{\psi_{k'}}{\psi_k}$ and conclude that
  $\omega_k=\omega_{k'}$. Since $G$ is connected, all
  $\omega_k$ are equal, and since the states $\ket{\psi_k}$ span $\cH$,
  it follows that $U$ is proportional to the identity .
\end{proof}

\section{Proof of Thm.~\ref{thm:fid_lowerbnd}}
\label{ap:fid_lowerbnd}

Suppose that the family of input states
$\{\ket{\psi_k}\}_{k=1}^{N}$ spans $\cH$, the graph
$G=(V,E)$ defined by $V=\{1,\ldots,k\}$ and
$E=\{(k,k'):\bra{\psi_k}\ket{\psi_{k'}}\ne0\}$ is connected, and for all $k$ the process $\cE$ satisfies
\begin{equation}\label{ap:constraint}
\quad\bra{\psi_k}\mathcal{E}(\hat{\psi}_k)\ket{\psi_k}\ge 1 - \epsilon.
\end{equation}
By dilation we can express $\cE$ as
\begin{equation*}
\mathcal{E}(\hat{\psi}_k)=\mathrm{Tr}_B(U(\hat{\psi}_k\otimes\ket{0}\bra{0})U^{\dag}),
\end{equation*}
  where $U$ is unitary and we introduced an ancillary system with
  initial state $\ket{0}$.  We label the original input system by $A$,
  the ancillary system by $B$ and disambiguate kets and operators
  with label subscripts and bras with label presuperscripts, when
  necessary. The state $U\ket{\psi_k}\ket{0}$ can be written as
\begin{equation}\label{ap: state decomp}
    U\ket{\psi_k}\ket{0}=a_k\ket{\psi_k}_{A}\ket{e_k}_{B} + b_k\ket{s_k}_{AB},
\end{equation}
where $\ket{e_k}$ is a normalized ancilla state,
and $\ket{s_k}_{AB}$ satisfies ${}^A\!\!\bra{\psi_k}{}\ket{s_k}_{AB}=0$,
and $a_{k}$ and $b_{k}$ are non-negative. The coefficients
  and states can be determined from the identity $a_{k}\ket{e_k}_B =
  {}^{A}\!\!\bra{\psi_{k}}U\ket{\psi_{k}}_{A}\ket{0}_{B}$.  
Eq.~\ref{ap:constraint} implies that $a_k\ge \sqrt{1-\epsilon}$ and
therefore $b_k\le\sqrt{\epsilon}$. Applying Eq.~\ref{ap: state decomp}
  for indices $k$ and $k'$ gives
  \begin{align*}
    \bra{\psi_{k'}}\ket{\psi_k} &=
    \bra{0}\bra{\psi_{k'}}U^{\dagger}U\ket{\psi_k}\ket{0}\notag\\
    &=
    a_{k'}a_{k}\bra{\psi_{k'}}\ket{\psi_k}\bra{e_{k'}}\ket{e_k}\notag\\
    &\hphantom{=\;}+
    b_{k'}a_{k}\bra{s_{k'}}\,\ket{\psi_k}\ket{e_k}+
    a_{k'}b_{k}\bra{\psi_{k'}}\bra{e_{k'}}\,\ket{s_{k}}\notag\\
    &\hphantom{=\;}+O(\epsilon).
  \end{align*}
  Let $\alpha_{k'k} = b_{k'}a_{k}\bra{s_{k'}}\,\ket{\psi_k}\ket{e_k}+
  a_{k'}b_{k}\bra{\psi_{k'}}\bra{e_{k'}}\,\ket{s_{k}}$.  Then
  $|\alpha_{k'k}|\leq 2\sqrt{\epsilon}$, and since
  $a_{k'}a_{k}=1-O(\epsilon)$, we have
  \begin{equation}\label{yet another equation}
    \bra{\psi_{k'}}\ket{\psi_k} =
    \bra{\psi_{k'}}\ket{\psi_k}\bra{e_{k'}}\ket{e_k}+\alpha_{k'k}+O(\epsilon).
  \end{equation}
If $k$ and $k^{\prime}$ are adjacent in
$G$ we can divide both sides of Eq.~\ref{yet another equation} by
$\bra{\psi_{k'}}\ket{\psi_k}$, and obtain
\begin{equation*}
\bra{e_{k'}}\ket{e_k} = 1-\frac{\alpha_{k'k}}{\bra{\psi_{k^{\prime}}}\ket{\psi_k}}+O(\epsilon).
\end{equation*}
If $k$ and $k^{\prime}$ are not adjacent, there is a path $P$ from $k$
to $k^{\prime}$, and the above equation applies for each edge along
the path. We make repeated use of the following fact: if
$\braket{b}{a}=1-\beta_1$ and $\braket{c}{b}=1-\beta_2$, for
$\beta_1$,$\beta_2\in\mathbb{C}$, then $\braket{c}{a}=1-\beta$, with
$\abs{\beta}\le(\sqrt{\abs{\beta_1}}+\sqrt{\abs{\beta_2}})^2$, to
leading order in $\abs{\beta_1}$,$\abs{\beta_2}$. This can be verified
by expanding
$\bra{c}\ket{a}=\bra{c}\left(\ket{b}\bra{b}+\Pi\right)\ket{a}$ with
$\Pi$ the projector onto the orthogonal complement of $\ket{b}$. We
conclude that
\begin{equation}\label{ap:e prime vs e bound}
    \braket{e_{k'}}{e_k}=1-\gamma_{kk'} + O(\epsilon),
\end{equation}
for complex $\gamma_{kk'}$ satisfying
\begin{equation*}
    \abs{\gamma_{kk'}}\le 2\sqrt{\epsilon}\, \bigg( \sum_{(i,j)\in P}\abs{\braket{\psi_i}{\psi_j}}^{-1/2} \bigg)^{2}.
\end{equation*}
Because this is true for any path from $k$ to $k^{\prime}$, we can choose the path such that the above sum is minimized. Therefore,
\begin{equation}\label{ap:env bounds}
\abs{\gamma_{kk'}}\le 2\sqrt{\epsilon}\,W_{kk'}^2
\end{equation}
where $W_{k k^{\prime}}$ is defined by Eq.~\ref{weight}.

To compute the process fidelity we add an additional system $C$ and start with
$AC$ in the maximally entangled state
$\ket{\phi}_{AC}=\frac{1}{\sqrt{d}}\sum_x\ket{x}_A\ket{x}_C$.
The process fidelity is then given by
\begin{equation}  \label{ap:phi_fe}
  F(\cE) = \Tr_{B}\bra{\phi}_{AC}U_{AB}\ket{\phi}_{AC}\ket{0}_B
  \bra{0}_{B}\bra{\phi}_{AC}U_{AB}^{\dagger}\ket{\phi}_{AC}.
\end{equation}
  By reordering if necessary, we can assume that
  $\{\ket{\psi_{k}}\}_{k=1}^{d}$ is a basis. There exists a (non-orthogonal
and un-normalized) dual basis
$\{|\widetilde{\psi}_k\rangle\}_{k=1}^{d}$, satisfying
$\bra*{\widetilde{\psi}_k}\ket*{\psi_{k'}}=\delta_{kk'}$ for $1\leq
k,k'\leq d$. For the remainder of this proof, indices $k,k'$ are
  in $\{1,\ldots,d\}$ by default. The computational basis states can be expanded as
\begin{equation*}
    \ket{x}=\sum_k\bra*{\widetilde{\psi}_k}\ket*{x}\ket{\psi_k}.
\end{equation*}
Expanding $\ket{\phi}_{AB}$ in terms of the computational
basis and invoking Eq.~\ref{ap: state decomp} gives
\begin{align*}
\begin{split}
    U_{AB}\ket{\phi}_{AC}\ket{0}_B=\frac{1}{\sqrt{d}}\sum_x\sum_k \bra*{\widetilde{\psi}_k}\ket*{x} U_{AB}\ket{\psi_k}_A\ket{0}_B\ket{x}_C\\
    =\frac{1}{\sqrt{d}}\sum_x\sum_k \bra*{\widetilde{\psi}_k}\ket*{x}\big(a_k\ket{\psi_k}_A\ket{e_k}_B+b_k\ket{s_k}_{AB}\big)\ket{x}_C.
    \end{split}
\end{align*}
Applying $\bra{\phi}_{AC}$ on the left gives
\begin{align*}
\begin{split}
    \frac{1}{d}&\sum_x\sum_k \braket*{\widetilde{\psi}_k}{x}\big(a_k\braket{x}{\psi_k}\ket{e_k}_B+b_k{}^{A}\!\!\bra*{x}{\!\!}\ket*{s_k}_{AB}\big)\\
    &=\frac{1}{d}\sum_k\big( a_k\ket{e_k}_B+b_k{}^{A}\!\!\bra*{\widetilde{\psi}_{k}}{\!\!}\ket*{s_k}_{AB} \big).
    \end{split}
\end{align*}
Substituting in Eq.~\ref{ap:phi_fe} yields
\begin{multline}
    F(\cE)= \frac{1}{d^2}\sum_{kk'} \Big\{\braket{e_{k'}}{e_k} + b_k{}^{A}\!\!\bra*{\widetilde{\psi}_{k}}{}^{B}\!\!\bra{e_{k'}}{\!\!}\ket{s_{k}}_{AB}+\\
    b_k^*{}^{AB}\!\!\bra*{s_{k'}}{\!\!}\ket*{\widetilde{\psi}_{k'}}_A\ket{e_k}_B\Big\}+O(\epsilon). \label{ap:fidexpr}
\end{multline}
To bound the magnitude of the sum involving
  $\braket{e_{k'}}{e_k}$, we apply Eqs.~\ref{ap:e prime vs e bound}
  and~\ref{ap:env bounds} to obtain
\begin{align*}
\begin{split}
    \abs{\frac{1}{d^2}\sum_{kk'}\braket{e_{k'}}{e_k}} &= \abs{\frac{1}{d}+\frac{1}{d^2}\sum_{k\ne k'}(1-\gamma_{kk'})}\\
    &=\abs{1-\frac{1}{d^2}\sum_{k\ne k'}\gamma_{kk'}}\\
    &\ge 1-\frac{4\sqrt{\epsilon}}{d^2}\sum_{k>k'}W_{kk'}^2
    \end{split}
\end{align*}
The terms
${}^{A}\!\!\bra*{\widetilde{\psi}_{k}}{}^{B}\!\!\bra*{e_{k'}}{\!\!}\ket*{s_k}_{AB}$
and ${}^{AB}\!\!\bra*{s_{k'}}{\!\!}\ket*{\psi_{k'}}_{A}\ket*{e_k}_{B}$
are each bounded in magnitude by $\|\ket*{\widetilde{\psi}_k}\|$.
To express this quantity in terms of the $\ket{\psi_{k}}$, define
  $C=\sum_k\ket{\psi_k}\bra{k}$.  Since
  $\bra{k}C^{-1}\ket{\psi_{k'}}=\bra{k}C^{-1}C\ket{k'}=\delta_{kk'}$
  for all $k,k'$, we have $\bra*{\widetilde{\psi}_{k}}=\bra{k}C^{-1}$
  and $\| \ket*{\widetilde{\psi}_{k}}\|^{2}
  =\bra{k}C^{-1}(C^{-1})^{\dagger}\ket{k}=\bra{k}(C^{\dagger}C)^{-1}\ket{k}$.
  The matrix $M=C^{\dagger}C$ can be recognized as the Gram matrix for
  the states $\ket{\psi_{k}}$, in terms of which we can write
  \begin{equation*}
    \|\ket*{\widetilde{\psi}_k}\|=\sqrt{(M^{-1})_{kk}}\,.
  \end{equation*}

Substituting these bounds into the
expression for the process fidelity in Eq.~\ref{ap:fidexpr} gives
\begin{equation*}
F(\cE)\ge 1-\frac{2}{d}\Big(\frac{2}{d}\sum_{k> k'}W_{kk'}^2+\sum_k\sqrt{(M^{-1})_{kk}}\Big)\sqrt{\epsilon}+O(\epsilon),
\end{equation*}
matching Thm.~\ref{thm:fid_lowerbnd} in the main text. This
lower bound can be generalized to the case of state dependent errors
$\epsilon_k$. Working back through the derivation, it suffices to
apply the following replacements to the expression for the lower bound:
\begin{equation*}
W_{kk'}^2\sqrt{\epsilon}\, \mapsto\, \min_{P\in\mathcal{P}_{kk'}}\Bigg(\sum_{(i,j)\in P}\sqrt{\frac{\sqrt{\epsilon_i}+\sqrt{\epsilon_j}}{\abs{\braket{\psi_i}{\psi_j}}}}\Bigg)^2,
\end{equation*}
and
\begin{equation*}
\sum_k\sqrt{(M^{-1})_{kk}}\sqrt{\epsilon}\,\mapsto\,\sum_k\sqrt{(M^{-1})_{kk}}\sqrt{\epsilon_{k}}.
\end{equation*}

\section{Derivation of Eq.~\ref{min chan hoff bounds}}

We set $N=d+1$. Given the channel
\begin{align*}
\begin{split}
    \mathcal{E}(\rho)&=(1-p)\rho+p\frac{d}{d+1}\sum_{k=0}^{d}\hat{\psi}_k\rho\,\hat{\psi}_k,\\
    p&=\frac{d^2}{d-1}\epsilon, 
    \end{split}
\end{align*}
and the expression for $\ket{\psi_k}$ in Eq.~\ref{def psi k}, we compute
\begin{align*}
\begin{split}
    \bra{x}\mathcal{E}(\hat{x})\ket{x} &= 1-p+p\frac{d}{d+1}\sum_{k=0}^{d}\abs{\bra{x}\ket{\psi_k}}^4\\
    &= 1-d\,\epsilon.
    \end{split}
\end{align*}
Therefore,
\begin{equation*}
F_1=\frac{1}{d}\sum_x\bra{x}\mathcal{E}(\hat{x})\ket{x} = 1-d\,\epsilon.
\end{equation*}
For the Fourier basis,
\begin{equation*}
    \bra{f_x}\mathcal{E}(\hat{f}_x)\ket{f_x}=1-p+p\frac{d}{d+1}\sum_{k=0}^{d}\abs{\bra{f_x}\ket{\psi_k}}^4,
\end{equation*}
where
\begin{equation*}
    \bra{f_x}\ket{\psi_k}=\frac{1}{d}\sum_{y=0}^{d-1}e^{2\pi i y(\frac{k}{d+1}-\frac{x}{d})}.
\end{equation*}
 We compute
\begin{equation*}
\sum_{k=0}^{d}\abs{\bra{f_{x}}\ket{\psi_k}}^4=\frac{d+1}{d^4}\sum_{(y_1,y_2,y_3,y_4)\in J}e^{-2\pi i(y_1-y_2+y_3-y_4)x/d},
\end{equation*}
where $J$ consists of the tuples $(y_{1},\ldots,y_{4})$ satisfying
  $y_1-y_2+y_3-y_4=0 \mod (d+1$) and $y_{i}\in\{0,\ldots,d-1\}$.
  For $m=0,d+1,-(d+1)$, let $J_{m}$ be the set of tuples
  $(y_{i})_{i=1}^{4}\in J$ such that $y_1-y_2+y_3-y_4=m$.
  Define $S_{m}(x)=\sum_{(y_1,y_2,y_3,y_4)\in J_{m}}e^{-2\pi i(y_1-y_2+y_3-y_4)x/d}$.
  Then $\sum_{x}S_{\pm (d+1)}(x) = 0$ and 
  $\sum_{x}S_{0}(x) = d |J_{0}|$. For $|J_{0}|$ we get
  \begin{align}
    |J_{0}| &= \sum_{l=0}^{2(d-1)}\Big|\big\{(y_{1},y_{3}):\notag\\
    &\hphantom{\sum_{l=0}^{2(d-1)}|\{}y_{1}+y_{3}=l\textrm{\ and\ }
    y_{1},y_{3}\in\{0,\ldots,d-1\}\big\}\Big|^{2}\notag\\
    &= \sum_{l=0}^{d-1}(l+1)^{2}+\sum_{l=d}^{2(d-1)}(2(d-1)-l+1)^{2}\notag\\
    &= d^{2}+2 \sum_{l=1}^{d-1}l^{2}\notag\\
    &= d^{2} + 2\frac{1}{6}(d-1)(d)(2d-1).\notag
  \end{align}
  We can now evaluate $F_{2}$.
  \begin{align}
    F_2 &= \frac{1}{d}\sum_x\bra{f_x}\mathcal{E}(\hat{f}_x)\ket{f_x}\notag\\
        &= 1-\left(\frac{d^{2}}{d-1}
        - \frac{d^{3}}{(d-1)(d+1)}\frac{d+1}{d^{4}}|J_{0}|\right)\epsilon\notag\\
      &= 1-\left(\frac{d^{2}}{d-1}
        - \frac{1}{d(d-1)}|J_{0}|\right)\epsilon\notag\\
      &= 1-\left(\frac{d^{2}}{d-1}
        - \frac{d}{d-1}-\frac{2d-1}{3}\right)\epsilon\notag\\
      &=1- \frac{d+1}{3}\epsilon.\notag
  \end{align}


\end{document}
%